\newtheorem{theorem}{Theorem}
\title{Hierarchical Partial Planarity}
\author{Patrizio~Angelini, Michael~A.~Bekos
\\
\medskip
\\
Wilhelm-Schickhard-Institut f\"ur Informatik, Universit\"at T\"ubingen, Germany\\
\texttt{\{angelini,bekos\}@informatik.uni-tuebingen.de}
}
\date{}
\newcommand{\ourproblem}{{\sc Hierarchical Partial Planarity}\xspace}
\newcommand{\fundamental}{primary\xspace}
\newcommand{\important}{secondary\xspace}
\newcommand{\nonimportant}{tertiary\xspace}
\newcommand{\embproblem}{{\sc Facial-Constrained Core Planarity}\xspace}
\newcommand{\Ef}{E_p}
\newcommand{\Ei}{E_s}
\newcommand{\En}{E_t}
\newcommand{\ourinstance}[1]{$G^{#1}=(V,\Ef \cup \Ei \cup \En)$\xspace}
\newcommand{\embinstance}[1]{$\langle G^{#1}=(V,E_1 \cup E_2), W \rangle$\xspace}
\newcommand{\pert}[1]{G^\textit{\scriptsize pert}_{#1}}
\newcommand{\skel}[1]{G^\textit{\scriptsize skel}_{#1}}
\newcommand{\redge}[1]{\textit{ref}(#1)}
\newcommand{\embpert}[1]{\mathcal{G}^\textit{\scriptsize pert}_{#1}}
\newcommand{\restrpert}[1]{\mathcal{H}^\textit{\scriptsize pert}_{#1}}
\newcommand{\algocase}[1]{\smallskip\noindent\textbf{\boldmath#1}}
\newcommand{\operation}{\textsc{merge-bags}\xspace}
\newcommand{\aux}{G_\textit{\scriptsize aux}}
\newcommand{\auxb}{\overline{G}_\textit{\scriptsize aux}}
\newcommand{\subskel}{H^\textit{\scriptsize skel}_\mu}
\newcommand{\subskelemb}{\mathcal{H}^\textit{\scriptsize skel}_\mu}
\newcommand{\e}[1]{e_\mu({#1})}
\begin{document}
\maketitle

% =================================================================
\begin{abstract}
In this paper we consider graphs whose edges are associated with a degree of \emph{importance}, which may depend on the type of connections they represent or on how recently they appeared in the scene, in a streaming setting. The goal is to construct layouts of these graphs in which the readability of an edge is proportional to its importance, that is, more important edges have fewer crossings. We formalize this problem and study the case in which there exist three different degrees of importance. We give a polynomial-time testing algorithm when the graph induced by the two most important sets of edges is biconnected. We also discuss interesting relationships with other constrained-planarity problems.
\end{abstract}
%=================================================================

% =================================================================
\section{Introduction}
\label{sec:introduction}
% =================================================================
Describing a graph in terms of a stream of nodes and edges, arriving and leaving at different time instants, is becoming a necessity for application domains where massive amounts of data, too large to be stored, are produced at a very high rate. The problem of visualizing graphs under this streaming model has been introduced only recently.

In particular, the first step in this direction was performed in~\cite{bbddgppsz-dtsm-12}, where the problem of drawing trees whose edges arrive one-by-one and disappear after a certain amount of steps has been studied, from the point of view of the area requirements of straight-line planar drawings. Later on, it was proved~\cite{gp-sgdfmp-13} that polynomial area could be achieved for trees, tree-maps, and outerplanar graphs if a small number of vertex movements are allowed after each update. The problem has also been studied~\cite{dr-psg-15} for general planar graphs, relaxing the requirement that edges have to be straight-line.

In this paper we introduce a problem motivated by this model, and in particular by the fact that the \emph{importance} of vertices and edges in the scene decades with time. In fact, as soon as an edge appears, it is important to let the user clearly visualize it, possibly at the cost of moving ``older'' edges in the more cluttered part of the layout, which may be unavoidable if the graph is large or dense. The idea is that the user may not need to \emph{see} the connection between two vertices, as she \emph{remembers} it from the previous steps.

Visually, one could associate the decreasing importance of an edge with its fading; theoretically, one could associate it with the fact that it becomes more acceptable to let it participate in some crossings. As a general framework for this kind of problems, we associate a weight $w(e)$ to every edge $e \in E$ and define a function $f: E \times E \rightarrow \{\texttt{YES},\texttt{NO}\}$ that, given a pair of edges $e$ and $e'$, determines whether it is allowed to have a crossing between $e$ and $e'$ based on their weights. Of course, if no assumption is made on function $f(\cdot)$, this model allows to encode instances of the NP-complete problem {\sc Weak Realizability}~\cite{DBLP:journals/jct/Kratochvil91}, in which the pairs of edges that are allowed to cross are explicitly given as part of the input. On the other hand, already the ``natural'' assumption that, if an edge $e$ is allowed to cross an edge $e'$, then it is also allowed to cross any edge $e''$ such that $w(e'') \leq w(e')$, could potentially make the problem tractable.

As a first step towards a formalization of this general idea, we introduce problem \ourproblem, which takes as input a graph $G=(V,E=\Ef \cup \Ei \cup \En)$ whose edges are partitioned into the \emph{\fundamental} edges in $\Ef$, the \emph{\important} edges in $\Ei$, and the \emph{\nonimportant} edges in $\En$. The goal is to construct a drawing of $G$ in which the \fundamental edges are crossing-free, the \important edges can only cross \nonimportant edges, while these latter edges can also cross one another. We say that any crossing that involves a \fundamental edge or two \important ones is \emph{forbidden}. We remark that this problem can be easily modeled under the general framework we described above. Namely, we can say that all edges in $\Ef$, $\Ei$, and $\En$ have weights $4$, $2$, and $1$, respectively, and function $f(\cdot)$ is such that $f(e,e') = \texttt{YES}$ if and only if $w(e)+w(e') \leq 3$.

We observe that our problem is a generalization of the recently introduced {\sc Partial Planarity} problem~\cite{DBLP:journals/comgeo/AngeliniBLDGMPT15,DBLP:conf/gd/Schaefer14}, in which the edges of a certain subgraph of a given graph must not be involved in any crossings. An instance of this problem is in fact an instance of our problem only composed of edges in $\Ef$ and $\En$.

Our main contribution is an $O(|V|^3 \cdot |\En|)$-time algorithm for \ourproblem when the graph induced by the \fundamental and the \important edges is biconnected (see Section~\ref{sec:algorithm}). Our result builds upon a formulation of the problem in terms of a \emph{constrained-planarity} problem, which we believe to be interesting in its own. Our algorithm for this constrained-planarity problem is based on the use of SPQR-trees~\cite{DBLP:journals/algorithmica/BattistaT96,DBLP:journals/siamcomp/BattistaT96}. This formulation also allows us to uncover interesting relationships with other important graph planarity problems, like {\sc Partially Embedded Planarity}~\cite{DBLP:journals/talg/AngeliniBFJKPR15,DBLP:journals/comgeo/JelinekKR13} and {\sc Simultaneous Embedding with Fixed Edges}~\cite{DBLP:reference/crc/BlasiusKR13,DBLP:journals/comgeo/BrassCDEEIKLM07} (see Section~\ref{sec:formulation}).

In Section~\ref{sec:preliminaries} we give definitions, and in Section~\ref{sec:conclusions} we conclude with open problems.

% =================================================================
\section{Preliminaries}
\label{sec:preliminaries}
% =================================================================
A graph $G=(V,E)$  containing neither loops nor multiple edges is \emph{simple}. We consider simple graphs, if not otherwise specified. A \emph{drawing} $\Gamma$ of $G$ maps each vertex of $G$ to a point in the plane and each edge of $G$ to a Jordan curve between its two end-points.

A drawing is \emph{planar} if no two edges cross except, possibly, at common endpoints. A planar drawing partitions the plane into connected regions, called \emph{faces}. The unbounded one is called \emph{outer face}. A graph is \emph{planar} if it admits a planar drawing.  A \emph{planar embedding} of a planar graph is an equivalence class of planar drawings that define the same set of faces and outer face. Let $H$ be a subgraph of a planar graph $G$, and let $\mathcal{G}$ be a planar embedding of $G$. We call \emph{restriction} of $\mathcal{G}$ to $H$ the planar embedding of $H$ that is obtained by removing the edges of $G \setminus H$ from $\mathcal{G}$ (and potential isolated vertices).

A graph is \emph{connected} if for any pair of vertices there is a path connecting them. A graph is \emph{$k$-connected} if the removal of $k-1$ vertices leaves it connected. A $2$- or $3$-connected graph is also referred to as \emph{biconnected} or \emph{triconnected}, respectively. 

The SPQR-tree $\mathcal{T}$ of a biconnected graph $G$ is a labeled tree representing the decomposition of $G$ into its triconnected components~\cite{DBLP:journals/algorithmica/BattistaT96,DBLP:journals/siamcomp/BattistaT96}. Every triconnected component of $G$ is associated with a node $\mu$ in $\mathcal{T}$. The triconnected component itself is referred to as the \emph{skeleton} of $\mu$, denoted by $\skel{\mu}$, whose edges are called \emph{virtual edges}. A node $\mu \in \mathcal{T}$ can be of one of four different types:%
\begin{inparaenum}[(i)]
\item \emph{S-node}, if $\skel{\mu}$ is a simple cycle of length at least~$3$;
\item \emph{P-node}, if $\skel{\mu}$ is a bundle of at least three parallel edges;
\item \emph{Q-node}, if $\skel{\mu}$ consists of two parallel edges;
\item \emph{R-node}, if $\skel{\mu}$ is a simple triconnected graph.
\end{inparaenum}
The set of leaves of $\mathcal{T}$ coincides with the set of Q-nodes, except for one arbitrary Q-node $\rho$, which is selected as the root of $\mathcal{T}$. Also, neither two $S$-nodes, nor two $P$-nodes are adjacent in~$\mathcal{T}$. Each virtual edge in $\skel{\mu}$ corresponds to a node $\nu$ that is adjacent to $\mu$ in $\mathcal{T}$, more precisely, to another virtual edge in $\skel{\nu}$. In particular, the skeleton of each node $\mu$ (except the one of $\rho$) contains a virtual edge, called \emph{reference edge} and denoted by $\redge{\mu}$, that has a counterpart in the skeleton of its parent. The endvertices of $\redge{\mu}$ are the \emph{poles} of $\mu$. The subtree $\mathcal{T}_\mu$ of $\mathcal{T}$ rooted at $\mu$ induces a subgraph $\pert{\mu}$ of $G$, called \emph{pertinent}, which is described by $\mathcal{T}_\mu$ in the decomposition. The SPQR-tree of $G$ is unique, up to the choice of the root, and can be computed in linear time~\cite{DBLP:conf/gd/GutwengerM00}.

% =================================================================
\section{Problem Formulation \& Relationships to Other \mbox{Problems}}
\label{sec:formulation}
% =================================================================
In this section we define a problem, called \embproblem, that will serve as a tool to solve \ourproblem and to uncover interesting relationships with other important graph planarity problems. This problem takes as input a graph $G=(V,E_1 \cup E_2)$ and a set $W \subseteq V \times V$ of pairs of vertices. Let $H$ be the subgraph of $G$ induced by the edges in $E_1$, which we call \emph{core} of $G$. The goal is to construct a planar embedding $\mathcal{G}$ of $G$ whose restriction $\mathcal{H}$ to $H$ is such that, for each pair $\langle u, v \rangle \in W$, there exists a face of $\mathcal{H}$ that contains both $u$ and $v$.

\begin{theorem}\label{th:equivalence}
Problems \embproblem and \ourproblem are linear-time equivalent.
\end{theorem}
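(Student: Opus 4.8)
The plan is to give two linear-time reductions, one in each direction, that keep the vertex set $V$ fixed and merely reinterpret the three edge classes, with the core of \embproblem playing the role of the \fundamental edges of \ourproblem. The observation driving both reductions is the following. In any solution of \ourproblem the \fundamental and \important edges are pairwise non-crossing---a \fundamental edge crosses nothing, and two \important edges may not cross each other---so $\Ef\cup\Ei$ is drawn planarly and realizes a planar embedding $\mathcal{G}$. A \nonimportant edge, by contrast, is forbidden only from crossing \fundamental edges, while it may cross \important and \nonimportant edges freely; hence such an edge is trapped inside a single face of the restriction $\mathcal{H}$ of $\mathcal{G}$ to $H=(V,\Ef)$, and conversely any two vertices lying on a common face of $\mathcal{H}$ can be joined by a curve drawn inside that face. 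This is precisely the facial requirement of \embproblem with core $\Ef$.

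For the reduction from \ourproblem to \embproblem I set $E_1=\Ef$, $E_2=\Ei$, and take $W$ to be the set of unordered endpoint pairs of the edges in $\En$. In the forward direction, from a solution drawing I read off the embedding $\mathcal{G}$ realized by $\Ef\cup\Ei$; since each \nonimportant edge avoids every \fundamental edge, it stays within one face of $\mathcal{H}$, so the pair of its endpoints lies on a common face, as demanded by $W$. In the backward direction, from an embedding $\mathcal{G}$ of $\Ef\cup\Ei$ whose restriction $\mathcal{H}$ places both vertices of every pair of $W$ on a common face, I draw $\Ef\cup\Ei$ according to $\mathcal{G}$ and route each \nonimportant edge as a curve inside the common face of its endpoints; this curve meets no \fundamental edge and may cross only the \important and \nonimportant edges already present in that face, all of which are permitted crossings. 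The reverse reduction uses the same dictionary backwards: from an instance $\langle G=(V,E_1\cup E_2),W\rangle$ I build the \ourproblem instance with $\Ef=E_1$, $\Ei=E_2$, and $\En$ the edges realizing the pairs of $W$, and correctness is witnessed by the very same two arguments. In both directions each edge is inspected a constant number of times, so the transformation runs in linear time.

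The one technical nuisance to dispatch is that a pair $\langle u,v\rangle\in W$ might already be an edge of $G$, which would make $\En$ a multigraph. I would remove this by first deleting from $W$ every pair that is already an edge---such a pair automatically lies on a common face of $\mathcal{H}$ and so imposes no real constraint---or, more uniformly, by subdividing each new \nonimportant edge with a single fresh degree-two vertex belonging to no other class, which can be placed in the relevant face without affecting anything else. The step I expect to demand the most care is the ``if'' part of the central observation: I must verify that the independent routings of all \nonimportant edges never produce a forbidden crossing, even when many of them share a face or must cross the \important edges drawn inside it. This holds because every crossing so created is of type \nonimportant--\important or \nonimportant--\nonimportant, but it is exactly the place where the crossing freedom granted to \nonimportant edges must be invoked explicitly.
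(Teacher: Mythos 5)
Your proof is correct and follows essentially the same route as the paper: the identical dictionary $E_1=\Ef$, $E_2=\Ei$, $W=$ endpoint pairs of $\En$, with the forward direction reading off the planar embedding induced by $\Ef\cup\Ei$ (planar because all crossings among those edges are forbidden) and the backward direction routing each \nonimportant edge inside the common face guaranteed by $W$. Your extra remark on pairs of $W$ that coincide with existing edges addresses a simplicity issue the paper glosses over in its ``symmetric'' reverse reduction, but it is a minor refinement rather than a different approach.
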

\begin{proof}
We show how to construct in linear time an instance \embinstance{\prime} of \embproblem starting from an instance \ourinstance{} of \ourproblem. Graph $G^\prime$ has the same vertex-set $V$ as $G$. Also, we set $E_1 = \Ef$ and $E_2 = \Ei$. Finally, for each edge $(u,v) \in \En$, we add pair $\langle u,v \rangle$ to $W$. The reduction in the opposite direction is symmetric.

Suppose that $\langle G^\prime, W \rangle$ is a positive instance, and let $\mathcal{G}^\prime$ be a corresponding planar embedding of $G^\prime$. We show how to construct a drawing $\Gamma$ of $G$ not containing any forbidden crossing. First, initialize $\Gamma$ to a planar drawing of $G^\prime$ whose embedding is $\mathcal{G}^\prime$. Note that restricting $\Gamma$ to the core $H^\prime$ of $G^\prime$ yields a planar drawing $\Gamma^\prime$ of $H^\prime$ in which, for each pair $\langle u, v \rangle \in W$, there exists a face $f_{u,v}$ of $\mathcal{H}^\prime$ that contains both $u$ and $v$. This implies that it is possible to draw edge $(u,v) \in \En$ in $\Gamma^\prime$ as a curve from $u$ to $v$ lying completely in the interior of $f_{u,v}$, and hence not crossing any \fundamental edge. Repeating this operation for every pair in $W$ yields a drawing $\Gamma$ with no forbidden crossings.

Suppose that $G$ is a positive instance, and let $\Gamma$ be the corresponding drawing of $G$. We show how to construct an embedding $\mathcal{G}^\prime$ of $G^\prime$ such that for every pair $\langle u,v \rangle \in W$, vertices $u$ and $v$ lie in the same face of $\mathcal{H}^\prime$. First, note that the drawing $\Gamma_{p,s}$ induced by the edges in $\Ef$ and $\Ei$ is planar, due to the definition of \ourproblem. Also, note that $\Gamma_{p,s}$ is a planar drawing of $G^\prime$, since $E_1 = \Ef$ and $E_2 = \Ei$. Let $\mathcal{G}^\prime$ be the planar embedding of $G^\prime$ corresponding to $\Gamma_{p,s}$. Let $\mathcal{H}^\prime$ be the restriction of $\mathcal{G}^\prime$ to $H^\prime$. Consider a pair $\langle u,v \rangle \in W$ and let $e=(u,v)$ be the corresponding \nonimportant edge of $G$. Since $e$ can be drawn in $\Gamma_{p,s}$ without crossing any \fundamental edge, vertices $u$ and $v$ are incident to the same face of $\mathcal{H}^\prime$. This concludes the proof.
\end{proof}

In the following, we describe relationships between \ourproblem and other important graph planarity problems, as {\sc Partial Planarity}~\cite{DBLP:journals/comgeo/AngeliniBLDGMPT15,DBLP:conf/gd/Schaefer14} and {\sc Partially Embedded Planarity}~\cite{DBLP:journals/talg/AngeliniBFJKPR15,DBLP:journals/comgeo/JelinekKR13}, and {\sc Simultaneous Embedding with Fixed Edges}~\cite{DBLP:reference/crc/BlasiusKR13}.

In {\sc Partial Planarity}~\cite{DBLP:journals/comgeo/AngeliniBLDGMPT15}, given a non-planar graph $G=(V,E)$ and a subset $F \subseteq E$ of its edges, the goal is to compute a drawing $\Gamma$ of $G$, if any, in which the edges of $F$ are not crossed by any edge of $G$. Positive and negative results are given in~\cite{DBLP:journals/comgeo/AngeliniBLDGMPT15} if the graph induced by $F$ is a connected spanning subgraph of $G$. In~\cite{DBLP:conf/gd/Schaefer14}, the corresponding decision problem is shown to be polynomial-time solvable. By setting $\Ef=F$, $\Ei= \emptyset$, and $\En=E \setminus F$, we can model any instance of {\sc Partial Planarity} as an instance of \ourproblem. We thus have the following.

\begin{theorem}\label{th:partial-planarity}
{\sc Partial Planarity} can be reduced in linear time to \ourproblem.
\end{theorem}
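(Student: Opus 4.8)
The plan is to give the reduction explicitly---which is already suggested in the text preceding the statement---and then to verify that it maps positive instances to positive instances and negative to negative. Given an instance $\langle G=(V,E), F\rangle$ of {\sc Partial Planarity}, I would build an instance of \ourproblem on the same vertex set $V$ by setting $\Ef = F$, $\Ei = \emptyset$, and $\En = E \setminus F$. This requires only a single scan of $E$ to split off the edges of $F$, so the reduction clearly runs in linear time; the substance of the argument lies entirely in establishing the equivalence of the two instances.

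The key observation I would make is that, in the constructed instance, the emptiness of $\Ei$ collapses the notion of forbidden crossing. By definition a crossing is forbidden when it involves a \fundamental edge or two \important edges; since there are no \important edges, the second case never occurs, and a crossing is forbidden exactly when it is incident to a \fundamental edge, that is, to an edge of $F$. Hence a drawing $\Gamma$ of $G$ avoids all forbidden crossings if and only if no edge of $F$ participates in any crossing of $\Gamma$.

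With this observation in hand, both directions are immediate. If $\Gamma$ solves the {\sc Partial Planarity} instance then no edge of $F$ is crossed by any edge of $G$, so $\Gamma$ has no forbidden crossing and solves the \ourproblem instance; conversely, a drawing with crossing-free \fundamental edges is precisely one in which the edges of $F$ are uncrossed. Since the same drawings of $G$ witness both problems, the instances are equivalent and the reduction is correct. I do not anticipate a real obstacle; the only point I would take care to state explicitly is that requiring the \fundamental edges to be crossing-free also rules out a crossing between two edges of $F$, which is exactly what ``not crossed by any edge of $G$'' demands in the definition of {\sc Partial Planarity}.
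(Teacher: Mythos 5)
Your proposal is correct and follows exactly the paper's approach: the paper also reduces by setting $\Ef = F$, $\Ei = \emptyset$, and $\En = E \setminus F$, treating the equivalence of the two instances as immediate. Your explicit verification that the emptiness of $\Ei$ makes ``forbidden crossing'' coincide with ``crossing involving an edge of $F$'' simply spells out the detail the paper leaves implicit.
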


In {\sc Partially-Embedded Planarity}~\cite{DBLP:journals/talg/AngeliniBFJKPR15}, given a planar graph $G$ and a planar embedding $\mathcal{H}$ of a subgraph $H$ of $G$, the goal is to determine whether $\mathcal{H}$ can be extended to a planar embedding of $G$, and to compute this embedding, if it exists. The problem is linear-time solvable~\cite{DBLP:journals/talg/AngeliniBFJKPR15} and characterizable in terms of forbidden subgraphs~\cite{DBLP:journals/comgeo/JelinekKR13}. We prove that \ourproblem can be used to encode instances of {\sc Partially-Embedded Planarity} in which $H$ is biconnected. Note that this special case is a central ingredient in the algorithm  in~\cite{DBLP:journals/talg/AngeliniBFJKPR15} for the general case. 

\begin{theorem}\label{th:partially-embedded-planarity}
{\sc Partially-Embedded Planarity} with biconnected $H$ can be reduced in quadratic time to \ourproblem.
\end{theorem}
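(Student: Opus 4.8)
The plan is to build a \ourproblem instance directly; by Theorem~\ref{th:equivalence} this is the same as targeting \embproblem. Let $\langle G,\mathcal H\rangle$ be an instance of {\sc Partially-Embedded Planarity} with $H\subseteq G$ biconnected and $\mathcal H$ a planar embedding of $H$. I build $G'=(V',\Ef\cup\Ei\cup\En)$ as follows. The \fundamental edges are the edges of $H$; the \important edges are the edges of $G$ not in $H$ (the ones to be inserted); and for every face $f$ of $\mathcal H$ I add a new \emph{face vertex} $v_f$ and join it, through \nonimportant edges, to every vertex on the boundary $\partial f$ of $f$, so that $V'=V\cup\{v_f : f\ \text{a face of}\ \mathcal H\}$. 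The intuition is that the three roles of \ourproblem match the three ingredients of the problem exactly: the \fundamental edges pin the drawing of $H$; the \important edges are the ones we must add without crossing $H$ or one another; and the \nonimportant stars serve to fix the combinatorial embedding of $H$ to be $\mathcal H$, while---being crossable by the \important edges---they do \emph{not} obstruct the insertions.

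\textbf{Correctness.} Suppose first that $G'$ admits a valid drawing. The \fundamental edges are crossing-free, so $H$ is drawn planarly and induces some embedding $\mathcal H'$. Each \important edge may cross only \nonimportant edges, hence it crosses neither $H$ nor another \important edge; therefore $H$ together with the \important edges---that is, $G$---is drawn without crossings and its embedding extends $\mathcal H'$. Since a \nonimportant star cannot cross any \fundamental edge, $v_f$ must lie in a single face of the drawing of $H$ that is incident to every vertex of $\partial f$; doing this for all $f$ forces $\mathcal H'=\mathcal H$ (see the lemma below), which yields the required extension. Conversely, from a planar extension of $\mathcal H$ to $G$ I keep $H$ \fundamental and the remaining edges \important (no crossings so far), and then place each $v_f$ in the interior of face $f$, routing its \nonimportant edges to $\partial f$ inside $f$. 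These curves stay within $f$, so they never cross a \fundamental edge; they may cross only the \important edges that subdivide $f$, which is permitted. No forbidden crossing arises, so $G'$ is a yes-instance, establishing the equivalence of the two instances.

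\textbf{The embedding-forcing lemma (main obstacle).} The crux is to prove that, for a biconnected $H$, any embedding in which the vertices of \emph{every} face of $\mathcal H$ lie on a common face must coincide with $\mathcal H$, up to a reflection and the choice of outer face---both immaterial for extendability. I expect this to be the hard part, because lying on a common face is strictly weaker than forming a facial cycle (a chord may have both endpoints on a face without lying on its boundary), so the claim cannot be proved one face at a time. The plan is to argue along the SPQR-tree $\mathcal T$ of $H$: an embedding is determined by the cyclic order chosen at each P-node and the flip chosen at each R-node, and I would show by induction over $\mathcal T$ that any deviation from the choice realized by $\mathcal H$ places on two different faces some pair of vertices that share a face of $\mathcal H$, thereby violating one of the constraints. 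Biconnectivity is used throughout, both to have $\mathcal T$ available and to guarantee that every face of $\mathcal H$ is bounded by a simple cycle.

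\textbf{Complexity.} The construction adds one vertex $v_f$ and $|\partial f|$ \nonimportant edges per face, for a total of $\sum_f|\partial f|=O(|E(H)|)$ new edges; the faces of $\mathcal H$ are computed in linear time, and assembling $G'$ (equivalently, the corresponding instance of \embproblem through Theorem~\ref{th:equivalence}) is carried out well within the claimed quadratic time.
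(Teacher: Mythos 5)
Your construction is genuinely different from the paper's. The paper translates the facial constraints into the pair set $W$ of \embproblem: it adds a pair $\langle u,v\rangle$ for \emph{every} pair of non-adjacent vertices sharing a face of $\mathcal{H}$ (this is where the quadratic bound comes from), and then invokes Theorem~\ref{th:equivalence}. You instead build a \ourproblem instance directly, with one gadget vertex $v_f$ per face of $\mathcal{H}$ joined by \nonimportant star edges to $\partial f$. As a construction this is sound, and it even has two advantages: it is linear-size rather than quadratic, and it enforces \emph{simultaneous} co-faciality of all of $\partial f$ directly, whereas the paper's pairwise constraints only give pairwise co-faciality and the step from pairwise to simultaneous co-faciality is itself only asserted in the paper. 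Your two drawing arguments (routing each star inside its face $f$, crossing only \important edges; conversely, reading off the face of the drawing of $H$ that contains $v_f$) are correct.

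The gap is exactly where you say it is: the embedding-forcing lemma is stated but never proved, and your SPQR-tree induction is a plan, not an argument. Since the whole equivalence rests on this lemma --- without it you cannot conclude that the induced embedding $\mathcal{H}'$ coincides with the \emph{given} embedding $\mathcal{H}$, and hence cannot produce an extension of $\mathcal{H}$ itself --- the proof is incomplete at its crux. Note that the paper closes precisely this step with a short observation that you could adopt in place of the SPQR machinery: since $H$ is biconnected, every face of $\mathcal{H}$ is bounded by a simple cycle $C_f$; if all vertices of $C_f$ are incident to one face $f'$ of $\mathcal{H}'$, then, because the cycle $C_f$ is itself drawn without crossings, its vertices must appear along $f'$ in the unique cyclic order given by $C_f$ (two interleaved attachments would force a crossing). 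This order-preservation, combined with the fact that $\mathcal{H}$ and $\mathcal{H}'$ have the same number of faces, pins the facial structure of $\mathcal{H}'$ to that of $\mathcal{H}$ up to reflection and choice of outer face. So the missing ingredient is not a deep structural induction but the cyclic-order-uniqueness argument; as written, however, your proposal does not contain it, and the theorem does not follow from what you have proved.
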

\begin{proof}
Let $\langle G^\prime=(V,E), H, \mathcal{H} \rangle$ be an instance of {\sc Partially-Embedded Planarity} in which $H$ is biconnected. We construct an instance \embinstance{} of \embproblem on the same vertex set $V$ as $G^\prime$, as follows. Set $E_1$ contains all the edges of $E$ that are contained in $H$; set $E_2$ contains the other ones, that is, $E_2 = E \setminus E_1 $. Finally, for every pair of non-adjacent vertices $\langle u, v \rangle$ that are on the same face of $\mathcal{H}$, we add a pair $\langle u, v \rangle$ to~$W$. This last step requires quadratic time and guarantees that in the solution of \embproblem, for each face $f$ of $\mathcal{H}$, all the vertices of $f$ are incident to the same face $f'$ of the planar embedding of the core of $G$. These vertices appear in the same order along $f$ and $f'$, since $H$ is biconnected and thus this order is unique. Hence, $\langle G^\prime, H, \mathcal{H} \rangle$ is a positive instance if and only if $\langle G, W \rangle$ is.
The statement follows by Theorem~\ref{th:equivalence}.
\end{proof}

A \emph{simultaneous embedding} of two planar graphs $G_1=(V,E_1)$ and $G_2=(V,E_2)$ embeds each graph in a planar way using the same vertex positions for both embeddings; edges are allowed to cross only if they belong to different graphs (see~\cite{DBLP:reference/crc/BlasiusKR13} for a survey). Our problem is related to a well-studied version of this problem, called {\sc Simultaneous Embedding with Fixed Edges} ({\sc Sefe})~\cite{adn-osnsp-14,DBLP:journals/jda/AngeliniBFPR12,DBLP:journals/comgeo/BlasiusR15,DBLP:journals/talg/BlasiusR16,DBLP:journals/comgeo/BrassCDEEIKLM07}, in which edges that are \emph{common} to both graphs must be embedded in the same way (and hence, cannot be crossed by other edges). So in our setting, these edges correspond to the \fundamental ones. However, to obtain a solution for {\sc Sefe}, it does not suffice to assume that the \emph{exclusive} edges of $G_1$ and $G_2$ are the \important and \nonimportant ones, respectively, as we could not guarantee that the edges of $G_2$ do not cross each other. So, in some sense, our problem seems to be more related to \emph{nearly-planar simultaneous embeddings}, where the input graphs are allowed to cross, as long as they avoid some local crossing configurations, e.g., by avoiding triples of mutually crossing edges~\cite{DBLP:journals/cj/GiacomoDLMW15}. Note that the {\sc Sefe} problem has also been studied in several settings~\cite{DBLP:conf/gd/AngeliniCCLBEKK16,DBLP:journals/jgaa/BekosDKW16,DBLP:journals/jgaa/ChanFGLMS15,DBLP:journals/jgaa/ErtenK05}.
An interpretation of {\sc Partial Planarity}, which also extends to \ourproblem, in terms of a special version of {\sc Sefe}, called {\sc Sunflower Sefe}~\cite{DBLP:reference/crc/BlasiusKR13}, was already observed  in~\cite{DBLP:journals/comgeo/AngeliniBLDGMPT15}.

The algorithm we present in Section~\ref{sec:algorithm} is inspired by an algorithm to decide in linear time whether a pair of graphs admits a {\sc Sefe} if the common graph is biconnected~\cite{DBLP:journals/jda/AngeliniBFPR12}. The main part of that algorithm is to find an embedding of the common graph in which every pair of vertices that are joined by an exclusive edge are incident to the same face; so, these edges play the role of the pairs in $W$. In a second step, it checks for crossings between exclusive edges of the same graph. Since the common graph is biconnected, the existence of these crossings does not depend on the choice of the embedding.

Thus, for instances of our problem in which the core $H$ of $G$ is biconnected, we can employ the main part of the algorithm in~\cite{DBLP:journals/jda/AngeliniBFPR12} to find a planar embedding of $H$ in which every two vertices that either are joined by an edge of $E_2$ or form a pair of $W$ are incident to the same face of $H$; note that in this case it is not even needed to perform the second check for the pairs in $W$. In this paper we extend this result to the case in which $H$ is not biconnected, but it becomes so when adding the edges of $E_2$. The main difficulty here is to ``control'' the faces of $H$ by operating on the embeddings of the biconnected graph $G$ composed of $H$ and of the edges of $E_2$. In Section~\ref{sec:algorithm} we discuss the problems arising from this fact and our proposed solution.

% =================================================================
\section{Biconnected Facial-Constrained Core Planarity}
\label{sec:algorithm}
% =================================================================
In this section, we provide a polynomial-time algorithm for instances \embinstance{} of \embproblem in which $G$ is biconnected. Recall that the goal is to find a planar embedding $\mathcal{G}$ of $G$ so that for each pair $\langle x,y \rangle \in W$ vertices $x$ and $y$ lie in the same face of the restriction $\mathcal{H}$ of $\mathcal{G}$ to the core $H$ of~$G$.

\subsection{High-Level Description of the Algorithm}\label{se:high-level}

We first give a high-level description of our algorithm. We perform a bottom-up traversal of the SPQR-tree $\mathcal{T}$ of $G$. At each step of the traversal, we consider a node $\mu \in \mathcal{T}$ and we search for an embedding $\embpert{\mu}$ of $\pert{\mu}$ satisfying the following requirements.%
\begin{enumerate}[R.1]
\item \label{r:1} For every pair $\langle x,y \rangle \in W$ such that $x$ and $y$ belong to $\pert{\mu}$, vertices $x$ and $y$ lie in the same face of the restriction $\restrpert{\mu}$ of $\embpert{\mu}$ to the part of the core $H$ in $\pert{\mu}$.
\item \label{r:2} For every pair $\langle x,y \rangle \in W$ such that exactly one vertex, say $x$, belongs to $\pert{\mu}$, vertex $x$ lies in the outer face of $\restrpert{\mu}$ (note that $y$ belongs to $G \setminus \embpert{\mu}$).
\end{enumerate}

In general, there may exist several ``candidate'' embeddings of $\pert{\mu}$ satisfying R.\ref{r:1} and R.\ref{r:2}. If there exists none, the instance is negative. Otherwise, we would like to select one of them and proceed with the traversal. However, while it would be sufficient to select \emph{any} embedding of $\pert{\mu}$ satisfying R.\ref{r:1}, it is possible that some of the embeddings satisfying R.\ref{r:2} are ``good'', in the sense that they can be eventually extended to an embedding of $G$ satisfying both R.\ref{r:1} and R.\ref{r:2}, while some others are not. Unfortunately, we cannot determine which ones are good at this stage of the algorithm, as this may depend on the structure of a subgraph that is considered later in the traversal. Thus, we have to maintain succinct information to describe the properties of the embeddings of $\pert{\mu}$ that satisfy R.\ref{r:1} and R.\ref{r:2}, so to group these embeddings into equivalence classes. 

\begin{figure}[t]
  \centering
  \subfloat[\label{fig:non-traversable}]{\includegraphics[width=0.22\textwidth,page=1]{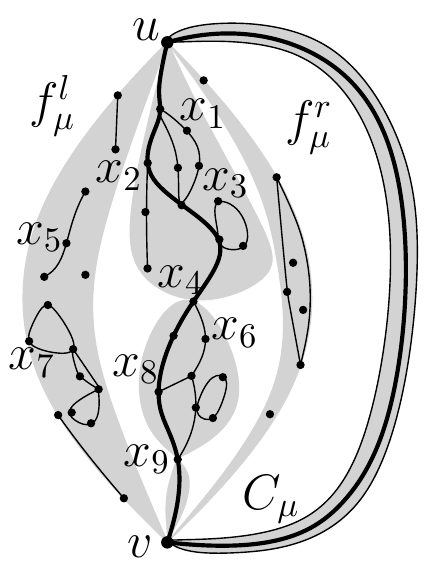}}
  \hfill
  \subfloat[\label{fig:traversable}]{\includegraphics[width=0.22\textwidth,page=2]{images/traversable.pdf}}
  \hfill
  \subfloat[\label{fig:bags}]{\includegraphics[width=0.18\textwidth,page=3]{images/traversable.pdf}}
  \hfill
  \subfloat[\label{fig:bags-traversable}]{\includegraphics[width=0.18\textwidth,page=4]{images/traversable.pdf}}
  \caption{
  (a--b) Graph $\pert{\mu}$ when $\mu$ is (a) non-traversable and (b) traversable. 
  (c--d) The bags of the nodes in (a) and in (b), respectively. A segment between $u$ and $v$ separates faces $f_\mu^l$ and $f_\mu^r$; each bag $B_\mu^i$ is represented by a circle across the segment, with its pockets $S_\mu^i$ and $T_\mu^i$ on the two sides; the vertices in the special bag $\mathcal{B}_\mu$ lie along the segment, as they are incident to both faces.}
  \label{fig:traversability}
\end{figure}

We denote by $x_1,\dots,x_k$ the vertices belonging to pairs $\langle x_i, y_i \rangle \in W$ such that $x_i \in \pert{\mu}$ and $y_i \notin \pert{\mu}$. In other words, vertices $x_1,\dots,x_k$ are those that must lie on the outer face of $\restrpert{\mu}$ due to R.\ref{r:2}. To describe the information to maintain, we need the following definition. We say that $\mu$ is \emph{non-traversable} if there is a cycle $C_\mu$ composed of edges of $E_1$ that contains both poles $u$ and $v$ of $\mu$, at least one edge of $\pert{\mu}$, and at least one of $G \setminus \pert{\mu}$; see Fig.~\ref{fig:non-traversable}. Otherwise, $\mu$ is \emph{traversable}, i.e., either in $\pert{\mu}$ or in $G \setminus \pert{\mu}$ every path between $u$ and $v$ contains edges of $E_2$; see Fig.~\ref{fig:traversable}. 

Intuitively, when $\mu$ is non-traversable, cycle $C_\mu$ splits the outer face of $\restrpert{\mu}$ into two faces $f_\mu^l$ and $f_\mu^r$ of $\mathcal{H}$ in any planar embedding of $G$. Hence, R.\ref{r:2} must be refined to take into account the possible partitions of $x_1,\dots,x_k$ with respect to their incidence to $f_\mu^l$ and $f_\mu^r$. For a single vertex $x_i \in \{x_1,\dots,x_k\}$ this is not an issue, as a flip of $\pert{\mu}$ can transform an embedding of $G$ in which $x_i$ is incident to one of these faces into another one in which it is incident to the other face. However, there may exist dependencies among different vertices of $\{x_1,\dots,x_k\}$, given by the structure of $\pert{\mu}$, which enforce the relative positions of these vertices with respect to $f_\mu^l$ and $f_\mu^r$. More precisely, let $\langle x,y \rangle, \langle x',y' \rangle \in W$ be two pairs such that $x, x' \in \pert{\mu}$ and $y, y' \notin \pert{\mu}$. Then, vertices $x$ and $x'$ may be enforced to be incident to the same face, either $f_\mu^l$ or $f_\mu^r$ (see $x_1$ and $x_3$ in Fig.~\ref{fig:non-traversable}), they may be enforced to be incident to different faces (see $x_2$ and $x_3$ in Fig.~\ref{fig:non-traversable}), or they may be independent in this respect (see $x_1$ and $x_6$ in Fig.~\ref{fig:non-traversable}).

We encode this information by associating a set of \emph{bags} with $\mu$, which contain vertices $x_1,\dots,x_k$. Each bag is composed of two \emph{pockets}; all the vertices in a pocket must be incident to the same face of $\mathcal{H}$ in any candidate embedding of $\pert{\mu}$, while all the vertices in the other pocket must be incident to the other face. Vertices of different bags are independent of each other. For the vertices of $\{x_1,\dots,x_k\}$ that are incident to both $f_\mu^l$ and $f_\mu^r$ in any embedding (see $x_4$ in Fig.~\ref{fig:non-traversable}), we add a \emph{special} bag, composed of a single set containing all such vertices; note that if a vertex of $\{x_1,\dots,x_k\}$ is a pole of $\mu$, then it belongs to the special bag. See Fig.~\ref{fig:bags} for the bags of the node in Fig.~\ref{fig:non-traversable}.

When $\mu$ is traversable, instead, the outer face of $\restrpert{\mu}$ corresponds to a single face of $\mathcal{H}$ in any planar embedding $\mathcal{G}$ of $G$. Thus, we do not need to maintain any information about the relative positions of $x_1,\dots,x_k$, and we can place all of them in the special bag. An illustration of the bags of the node represented in Fig.~\ref{fig:traversable} is given in Fig.~\ref{fig:bags-traversable}.

If the visit of the root $\rho$ of $\mathcal{T}$ at the end of the bottom-up traversal is completed without declaring $\langle G, W \rangle$ as negative, we have that $\pert{\rho} = G$ admits a planar embedding satisfying R.\ref{r:1} and thus $\langle G, W \rangle$ is a positive instance.

As anticipated in Section~\ref{sec:formulation}, we discuss two main problems arising when extending the algorithm in~\cite{DBLP:journals/jda/AngeliniBFPR12} for {\sc SEFE} to solve our problem when $H$ is not biconnected. 

First, when $H$ is biconnected it is always possible to decide the flip of every child component for every node that is either an R- or a P-node, but not when it is an S-node. On the other hand, the fact that no two S-nodes can be adjacent to each other in the SPQR-tree ensures that this choice is always fixed in the next step of the algorithm (refer to \emph{visible nodes} in~\cite{DBLP:journals/jda/AngeliniBFPR12}). When $H$ is not biconnected, even the flips of the children of R- and P-nodes (and of S-nodes) may be not uniquely determined. So, there is no guarantee that a choice for these flips can be done in the next step; in fact, it is sometimes necessary to defer this choice till the end of the algorithm. This comes with another difficulty. In the course of the algorithm, it could be required to make ``partial'' choices for these flips, in the sense that constraints imposed by the structure of the graph could enforce two or more components to be flipped in the same way (without enforcing, however, a specific flip for them). To encode the possible flips of the components that are enforced by the constraints considered till a certain point of the algorithm, we introduced the bags, which represent the main technical contribution of this work.

Second, the order of the vertices along the faces of $H$ is not unique if $H$ is not biconnected. For \embproblem, this is not an issue, as it is enough that the vertices belonging to the pairs in $W$ share a face, but we do not impose any requirement on their order along these faces. On the other hand, if we were able to also control these orders, we could provide an algorithm for instances of {\sc Sefe} in which one of the two graphs is biconnected, which would be a significant step ahead in the state of the art for this problem. We recall that an efficient algorithm for this case (even with the additional restriction that the common graph is connected) would imply an efficient algorithm for all the instances in which the common graph is connected (and no restriction on the two input graphs), as observed in~\cite{adn-osnsp-14}.

\subsection{Detailed Description of the Algorithm}\label{se:detailed}

We give the details of the algorithm. Let $\mathcal{T}$ be the SPQR-tree of $G$, rooted at a Q-node~$\rho$. First, we compute for each node $\mu \in \mathcal{T}$, whether $\mu$ is traversable or not, that is, whether there exist two paths composed of edges of $H$ between the poles of $\mu$, one in $\pert{\mu}$ and one in $G \setminus \pert{\mu}$. A na\"{i}ve approach would be to perform a BFS-visit restricted to the edges of $H$ in each of the two graphs in linear time per node, and thus in total quadratic time. For a linear-time algorithm, we proceed as follows; see also~\cite{DBLP:journals/talg/AngeliniBFJKPR15}. We traverse $\mathcal{T}$ bottom-up to compute for each node $\mu$ whether there exists the desired path in $\pert{\mu}$, using the same information computed for its children. Then, with a top-down traversal, we search for the path in $G \setminus \pert{\mu}$, using the information computed in the first traversal.

The main part of our algorithm consists of a bottom-up traversal of $\mathcal{T}$. For a node $\mu \in \mathcal{T}$, let $\langle x_1, y_1 \rangle, \dots, \langle x_k, y_k \rangle$ be all pairs of $W$ such that $x_i \in \pert{\mu}$ and $y_i \notin \pert{\mu}$. We denote by $B_\mu^1, \dots, B_\mu^q$ the bags of $\mu$ and by $\mathcal{B}_\mu$ its special bag; these bags determine a partition of the vertices $x_1,\dots,x_k$ that are required to be on the outer face of $\restrpert{\mu}$ due to R.\ref{r:2}. The vertices of each bag $B_\mu^i=\langle S_\mu^i,T_\mu^i \rangle$ are partitioned into its two pockets $S_\mu^i$ and $T_\mu^i$; all vertices of $S_\mu^i$ must lie in the same face of $\mathcal{H}$, either $f_\mu^l$ or $f_\mu^r$, while all vertices of $T_\mu^i$ must lie on the other face.

We first describe an operation, called \emph{\operation}, to modify the bags of a node $\mu$ in order to satisfy the constraints that may be imposed by R.\ref{r:1} when there exists a pair $\langle x,y \rangle \in W$ such that $x,y \in \pert{\mu}$. Refer to Figs.~\ref{fig:operation-before}--~\ref{fig:operation-after}.
In particular, if at least one of $x$ and $y$ belongs to the special bag $\mathcal{B}_\mu$ (see $\langle x_4, x_6 \rangle$ in the figure), or if $x$ and $y$ belong to the same pocket of a bag $B_\mu^i$, with $1 \leq i \leq q$, then we do not modify any bag.
If $x \in S_\mu^i$ and $y \in T_\mu^i$, for some $1 \leq i \leq q$, or vice versa, then we declare the instance negative.
Otherwise, we have $x \in B_\mu^i$ and $y \in B_\mu^j$, for some $1 \leq i \neq j \leq q$, and we merge $B_{\mu}^i$ and $B_{\mu}^{j}$ into a single bag $B_{\mu}=\langle S_\mu,T_\mu \rangle$, i.e., we merge into $S_\mu$ the pockets of $B_{\mu}^i$ and $B_{\mu}^{j}$ containing $x$ and $y$, respectively, and we merge into $T_\mu$ the other two pockets of $B_{\mu}^i$ and $B_{\mu}^{j}$; see $\langle x_2, x_5 \rangle$ in the figure.
We finally remove $\langle x,y \rangle$ from $W$ and, if there is no other pair in $W$ containing $x$ (resp., $y$), we remove it from the bag it belongs to.

\begin{figure}[t!]
	\centering
	\subfloat[\label{fig:operation-before}]{\includegraphics[width=0.24\textwidth,page=1]{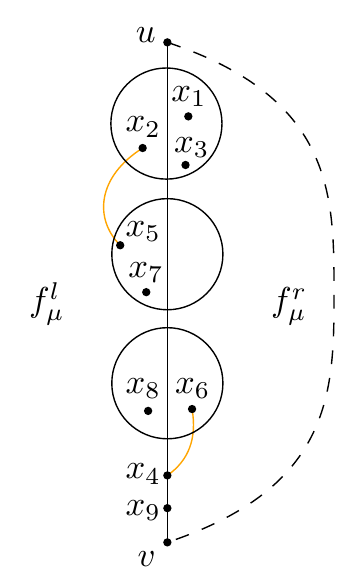}}
	\hfill
	\subfloat[\label{fig:operation-after}]{\includegraphics[width=0.24\textwidth,page=2]{images/operation.pdf}}
	\hfill
	\subfloat[\label{fig:series-before}]{\includegraphics[width=0.24\textwidth,page=1]{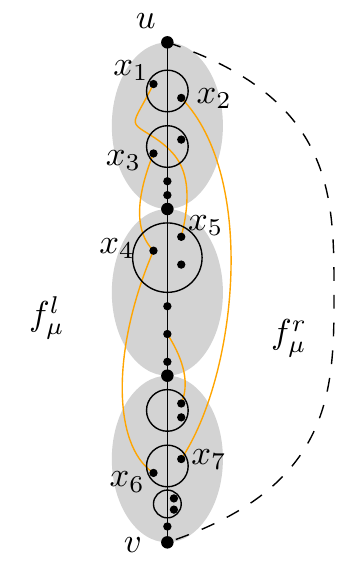}}
	\hfill
	\subfloat[\label{fig:series-after}]{\includegraphics[width=0.24\textwidth,page=2]{images/nodes.pdf}}
	\caption{
	(a) The bags of a node $\mu$ and two pairs $\langle x_2,x_5 \rangle, \langle x_4,x_6 \rangle \in W$ (orange curves). 
	(b) The bags of $\mu$ after operation \operation. Pair $\langle x_2,x_5 \rangle$ merged two bags, while $\langle x_4,x_6 \rangle$ did not modify any bag, since $x_4 \in \mathcal{B}_\mu$. 
	(c) Initialization of the bags of an S-node $\mu$. (d) The bags of $\mu$ after \operation. The instance is negative, as pair $\langle x_1,x_5 \rangle$ is such that $x_1 \in S_\mu^1$ and $x_5 \in T_\mu^1$.}
	\label{fig:operation}
\end{figure}

At each step of the traversal of $\mathcal{T}$, we consider a node $\mu$, with poles $u$ and $v$, and children $\nu_1,\dots,\nu_h$ in $\mathcal{T}$. We denote by $e_i$, for $i=1,\ldots,h$, the virtual edge of $\skel{\mu}$ corresponding to $\nu_i$.

\algocase{Suppose that $\mu$ is a Q-node.} If any of the two poles of $\mu$ belongs to $\{x_1,\dots,x_k\}$, then we add it to $\mathcal{B}_\mu$, independently of whether $\mu$ is traversable or not.

\algocase{Suppose that $\mu$ is an S-node}. We initialize special bag  $\mathcal{B}_\mu$ to the union of the special bags of $\nu_1,\dots,\nu_h$. Note that if $\mu$ is traversable, then all of its children are traversable. So, in this case, we already have that all vertices $x_1,\dots,x_k$ are in $\mathcal{B}_\mu$. Further, if $\mu$ is non-traversable, we add to the set of bags of $\mu$ all the non-special bags of its children; see Fig.~\ref{fig:series-before}.  Finally, as long as there exists a pair $\langle x,y \rangle \in W$ such that both $x$ and $y$ belong to $\pert{\mu}$, we apply operation \operation to $\langle x,y \rangle$. This may result in uncovering a negative instance, but only when $\mu$ is non-traversable. See Fig.~\ref{fig:series-after}. 

\algocase{Suppose that $\mu$ is an R-node.} See Fig.~\ref{fig:rigid}.  
Let $\subskel$ be the graph composed of the vertices of $\skel{\mu}$ and of the virtual edges corresponding to non-traversable children of $\mu$, plus $\redge{\mu}$ if $\mu$ is non-traversable; see Fig.~\ref{fig:subskel}. Let $\subskelemb$ be the restriction of the unique planar embedding of the triconnected graph $\skel{\mu}$ to $\subskel$. 
Note that, for each traversable child $\nu_i$ of $\mu$, virtual edge $e_i$ is \emph{contained in} one face $f_{\nu_i}$ of $\subskelemb$; in Fig.~\ref{fig:subskel}, $(w_4,w_6)$ is contained in face $\{w_3,w_4,w_5,w_6\}$. For a non-traversable child $\nu_i$, denote by $f_{\nu_i}^1$ and $f_{\nu_i}^2$ the two faces of $\subskelemb$ virtual edge $e_i$ is incident to. For a vertex $x \in V$ that does not belong to $\skel{\mu}$, we denote by $\e{x}$ either the virtual edge $e_i$, if $x \in \pert{\nu_i}$, or the virtual edge $\redge{\mu}$ representing the parent of $\mu$, if $x \in G \setminus \pert{\mu}$.

Suppose that $\mu$ is non-traversable; see Fig.~\ref{fig:association}. Recall that in this case $\redge{\mu} \in \subskel$; let $f_\mu^l$ and $f_\mu^r$ be the two faces of $\subskelemb$ incident to $\redge{\mu}$. Any other virtual edge $e_i$ of $\subskel$ such that $\{f_{\nu_i}^1,f_{\nu_i}^2\} = \{f_\mu^l,f_\mu^r\}$ is called \emph{$2$-sided}; see $(w_2,w_3),(v,w_6)$ in Fig.~\ref{fig:association}. 

We consider each pair $\langle x,y \rangle \in W$ such that $x \in \pert{\nu_i}$, with $1 \leq i \leq h$, and $y \notin \pert{\nu_i}$. Let $e^x=\e{x}$ and $e^y=\e{y}$. A necessary condition for R.\ref{r:1} and R.\ref{r:2} is that $e^y$ is either contained in or incident to face $f_{\nu_i}$ (if $\nu_i$ is traversable) or one of $f_{\nu_i}^1$ and $f_{\nu_i}^2$ (if $\nu_i$ is non-traversable). If this is not the case, we declare the instance negative. 

\begin{figure}[t!]
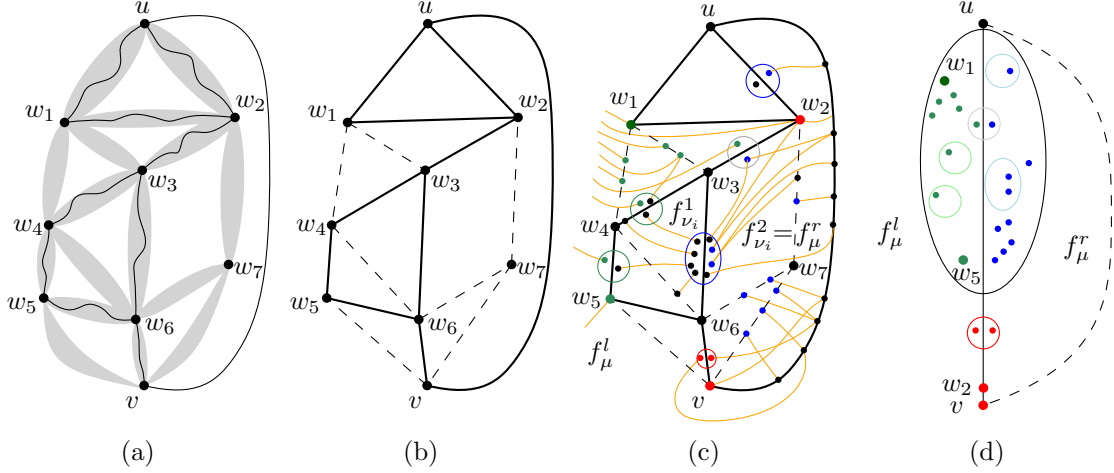

	\centering
	\subfloat[\label{fig:rigid}]{\includegraphics[width=0.24\textwidth,page=3]{images/nodes.pdf}}
	\hfill
	\subfloat[\label{fig:subskel}]{\includegraphics[width=0.24\textwidth,page=4]{images/nodes.pdf}}
	\hfill
	\subfloat[\label{fig:association}]{\includegraphics[width=0.24\textwidth,page=5]{images/nodes.pdf}}
	\hfill
	\subfloat[\label{fig:bags-rigid}]{\includegraphics[width=0.24\textwidth,page=6]{images/nodes.pdf}}
	\caption{
	(a) Graph $\pert{\mu}$ when $\mu$ is a non-traversable R-node. 
	(b) Graph $\subskel$ (solid) and the traversable children (dashed) of $\mu$. 
	    Children corresponding to virtual edges $(v,w_6)$ and $(w_2,w_3)$ are $2$-sided.
	(c) Association of pockets with faces. 
	    Blue (green) pockets are associated with $f_\mu^r$ ($f_\mu^l$, resp.).
	    Red pockets are not associated.
	    Gray pockets belong to $2$-sided children, but they are associated with $f_\mu^r$ and $f_\mu^l$.
	(d) The bags of $\mu$.}
	\label{fig:rigid-node}
\end{figure}

Another constraint imposed by this pair is the following.
Suppose that $x$ belongs to a pocket, say $S_{\nu_i}$, of a bag $B_{\nu_i}$ of $\nu_i$ (this can only happen if $\nu_i$ is non-traversable). 
If $e^x$ and $e^y$ share exactly one face, say $f_{\nu_i}^1$, then all pairs $\langle x', y' \rangle \in W$ with $x' \in S_{\nu_i}$ must be such that $\e{y'}$ is either contained in or incident to $f_{\nu_1}$; also, all the pairs $\langle x'', y'' \rangle \in W$ with $x'' \in T_{\nu_i}$ must be such that $\e{y''}$ is either contained in or incident to $f_{\nu_2}$. This is due to the fact all the vertices in the same pocket must be incident to the same face of $\subskelemb$. So, if this is not the case, we declare the instance negative. Otherwise, we \emph{associate} $S_{\nu_i}$ with $f_{\nu_i}^1$ and $T_{\nu_i}$ with $f_{\nu_i}^2$. 
If $e^x$ and $e^y$ share both faces $f_{\nu_i}^1$ and $f_{\nu_i}^2$, instead, we have to postpone the association of $S_{\nu_i}$ and $T_{\nu_i}$, as at this point we cannot make a unique choice. Note that an association for these pockets may be performed later, due to another pair of $W$.
Suppose now that $x$ belongs to the special bag $\mathcal{B}_{\nu_i}$ of ${\nu_i}$. Then, we associate $\mathcal{B}_{\nu_i}$ to either $f_{\nu_i}$, if $\nu_i$ is traversable, or to both $f_{\nu_i}^1$ and $f_{\nu_i}^2$, if it is non-traversable. 
This completes the process of pair $\langle x,y \rangle$.

Once all children $\nu_1,\dots,\nu_h$ of $\mu$ have been considered, there may still exist pockets that are not associated. Let $S_{\nu_i}$ be one of such pockets, and consider each pair $\langle x,y \rangle \in W$ such that $x \in S_{\nu_i}$. Note that $\e{x}$ shares both faces $f_{\nu_i}^1$ and $f_{\nu_i}^2$ with $\e{y}$. If $y$ belongs to a pocket, say $T_{\nu_j}$, that is associated with one of $f_{\nu_i}^1$ and $f_{\nu_i}^2$, say $f_{\nu_i}^1$, then we associate $S_{\nu_i}$ with $f_{\nu_i}^1$ and $T_{\nu_i}$ with $f_{\nu_i}^2$. In fact, the association of $T_{\nu_j}$ with $f_{\nu_i}^1$ implies that $y$ will be incident to $f_{\nu_i}^1$ in any embedding of $G$ that is a solution for $\langle G,W \rangle$. If two pairs determine different associations for $S_{\nu_i}$ and $T_{\nu_i}$, we declare the instance negative. 

We repeat the above process as long as there exist pockets that can be associated by means of this procedure. Note that this does not necessarily result in an association for all pockets; however, we can say that all the mandatory choices for $\pert{\mu}$ have been performed. Consider any of the remaining pockets $S_{\nu_i}$. If $\nu_i$ is not $2$-sided, then we associate $S_{\nu_i}$ with $f_{\nu_i}^1$ and $T_{\nu_i}$ with $f_{\nu_i}^2$. This association can be done arbitrarily since its effect is limited to $\pert{\mu}$ and not to $G \setminus \pert{\mu}$, as $\nu_i$ is not $2$-sided. Then, we propagate this association to other pockets by performing the procedure described above. We repeat this process until the only pockets that are not associated, if any, belong to bags of $2$-sided children of $\mu$. Note that the previous arbitrary association cannot be propagated to pockets of $2$-sided children, since their virtual edges are only incident to $f_{\mu}^l$ and $f_{\mu}^r$.

Based on the association of $\nu_1,\dots,\nu_h$ with the faces of $\subskelemb$, we determine the bags of $\mu$; see Fig.~\ref{fig:bags-rigid}. The special bag $\mathcal{B}_\mu$ of $\mu$ contains the poles of $\mu$, if they belong to $\{x_1,\dots,x_k\}$, and the union of the special bags of the $2$-sided children of $\mu$. Next, we create a bag $B_{\mu}=\langle S_{\mu},T_{\mu} \rangle$, such that $S_{\mu}$ and $T_{\mu}$ contain all the vertices of the pockets associated with $f_\mu^l$ and $f_\mu^r$, respectively. 
Finally, we add to the set of bags of $\mu$ the non-special bags of the $2$-sided children of $\mu$ whose pockets have not been associated with any face of $\subskelemb$ (this allows us to postpone their association). Then, we apply operation \operation to all pairs $\langle x,y \rangle \in W$ such that both $x$ and $y$ belong to $\pert{\mu}$ in order to merge the bags of different $2$-sided children of $\mu$ (again this may result in uncovering a negative instance). This completes the case in which $\mu$ is non-traversable.

It remains to consider the simpler case in which $\mu$ is traversable. In this case virtual edge $\redge{\mu}$ does not belong to $\subskel$; hence faces $f_{\mu}^l$ and $f_{\mu}^r$ do not exist, and none of the children of $\mu$ is $2$-sided. This implies that performing all the operations described above results in an association of each pocket and of each special bag of the children of $\mu$ with some face of $\subskelemb$. Recall that, since $\mu$ is traversable, $\mu$ has only its special bag $\mathcal{B}_\mu$. We add to $\mathcal{B}_\mu$ all the vertices of the pockets and of the special bags that have been associated with the outer face of $\subskelemb$. This concludes the R-node case.

\algocase{Suppose that $\mu$ is a P-node.} Refer to Fig.~\ref{fig:parallel-node}. We distinguish three cases, based on whether $\mu$ has%
\begin{inparaenum}[(i)]
\item \label{p:0} zero, 
\item \label{p:1} one, or 
\item \label{p:2} more than one non-traversable child.
\end{inparaenum}

In Case~(\ref{p:0}), we have that $\mu$ is traversable. So, it has only its special bag $\mathcal{B}_\mu$, in which we add all the vertices of the special bags of its children. Note that all virtual edges in $\skel{\mu}$ are incident to the same face of $\restrpert{\mu}$ and hence R.\ref{r:1} and R.\ref{r:2} are trivially satisfied.

\begin{figure}[t]
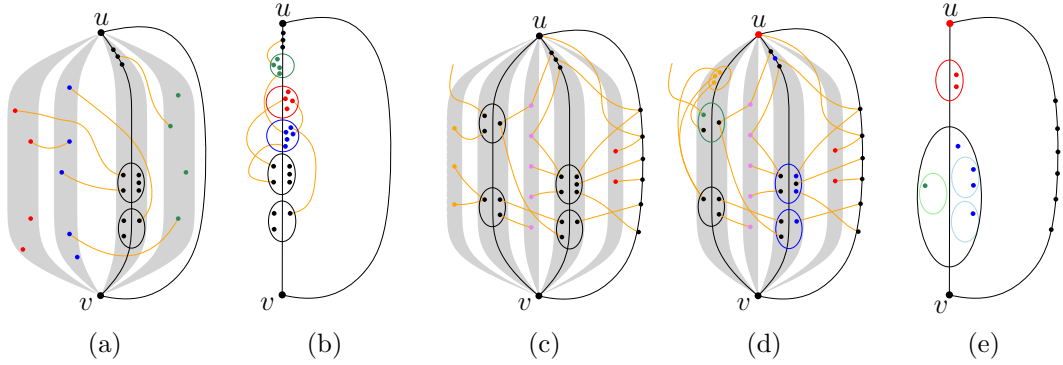

	\centering
	\subfloat[\label{fig:parallel}]{\includegraphics[width=0.18\textwidth,page=7]{images/nodes.pdf}}
	\hfil
	\subfloat[\label{fig:parallel-1}]{\includegraphics[width=0.18\textwidth,page=8]{images/nodes.pdf}}
	\hfil
	\subfloat[\label{fig:parallel-2}]{\includegraphics[width=0.18\textwidth,page=9]{images/nodes.pdf}}
	\hfil
	\subfloat[]{\includegraphics[width=0.18\textwidth,page=10]{images/nodes.pdf}}
	\hfil
	\subfloat[\label{fig:bags-parallel}]{\includegraphics[width=0.18\textwidth,page=11]{images/nodes.pdf}}
	\caption{Illustration for the case in which $\mu$ is a P-node with: 
	(a-b)~one, and (c-e)~more than one non-traversable children.
	The color-scheme of this figure follows the one of Fig.~\ref{fig:rigid-node}}
	\label{fig:parallel-node}
\end{figure}

Next, we consider Case~(\ref{p:1}), in which $\mu$ has exactly one non-traversable child, say~$\nu_1$; see Fig.~\ref{fig:parallel}. In this case, $\mu$ is non-traversable, since the path of $G \setminus \pert{\nu_1}$ composed of edges of $H$ also belongs to $G \setminus \pert{\mu}$. We initialize the set of bags of $\mu$ to the set of bags of $\nu_1$. For each traversable child $\nu_i$, with $i=2,\dots,h$, we add to $\mu$ a new bag $B_{\mu}^i$, where $S_{\mu}^i$ contains all the vertices in the special bag $\mathcal{B}_{\nu_i}$ of $\nu_i$, while $T_{\mu}^i$ is empty; see Fig.~\ref{fig:parallel-1}. This represents the fact that all the vertices in $\pert{\nu_i}$ must lie on the same side of the cycle passing through $\pert{\nu_1}$ and $G \setminus \pert{\mu}$ to satisfy R.\ref{r:2}. Finally, we apply operation \operation to all pairs $\langle x,y \rangle \in W$ such that both $x$ and $y$ belong to $\pert{\mu}$. 

Finally, we consider Case~(\ref{p:2}), in which $\mu$ has more than one non-traversable child; see Figs.~\ref{fig:parallel-2}-\ref{fig:bags-parallel}. We construct an auxiliary graph $\aux$ with a vertex $v_i$ for each child $\nu_i$ of $\mu$, which is colored \emph{black} if $\nu_i$ is non-traversable and \emph{white} otherwise. Graph $\aux$ also has a vertex $v$ corresponding to $\redge{\mu}$, which is colored black if $\mu$ is non-traversable and white otherwise. Then, we consider every pair $\langle x,y \rangle \in W$ such that $x \in \pert{\nu_i}$, for some child $\nu_i$ of $\mu$. If $y \in \pert{\nu_j}$, for some $j \neq i$, then we add edge $(v_i,v_j)$ to $\aux$, while if $y \in G \setminus \pert{\mu}$, then we add edge $(v_i,v)$ to $\aux$. If $\aux$ has multiple copies of an edge, we keep only one of them. We assume w.l.o.g.\ that no two white vertices are adjacent in $\aux$, as otherwise we could contract them to a new white vertex. In fact, the virtual edges representing traversable children of $\mu$ corresponding to adjacent white vertices must be contained in the same face of $\restrpert{\mu}$, due to R.\ref{r:1}.

Consider each white vertex $w$ of $\aux$. If $w$ has more than two black neighbors, we declare the instance negative, as the virtual edge of the traversable child of $\mu$ corresponding to $w$ should share a face in $\restrpert{\mu}$ with more than two virtual edges representing non-traversable children of $\mu$, which is not possible. If $w$ has at most one black neighbor, we remove $w$ from $\aux$. Finally, if $w$ has exactly two black neighbors $b$ and $b'$, then we remove $w$ from $\aux$ and we add edge $(b,b')$ to $\aux$ (if it is not present). Once we have considered all white vertices, the resulting graph  $\auxb$ has only black vertices. 

We check whether $\auxb$ is either a cycle through all its vertices or a set of paths (some of which may consist of single vertices). The necessity of this condition can be proved similar to~\cite{DBLP:journals/jda/AngeliniBFPR12}. The only difference is in the edges between black vertices that are introduced due to degree-$2$ white vertices. Let $(b,b')$ be one of such edges and let $w$ be the white vertex that was adjacent to $b$ and $b'$. Also, let $e_b$, $e_{b'}$, and $e_w$ be the virtual edges representing the children of $\mu$ (or virtual edge $\redge{\mu}$, if $\mu$ is non-traversable) corresponding to $b$, $b'$, and $w$, respectively. Then, $e_b$ and $e_{b'}$ must share a face in $\restrpert{\mu}$, and this face must contain $e_w$, due to R.\ref{r:1} and R.\ref{r:2}. If the above condition on $\auxb$ is not satisfied, then we declare the instance negative; otherwise, we fix an order of the black vertices of $\auxb$ based either on the cycle or on an arbitrary order of the paths.

We now construct graph $\subskel$ in the same way as for the R-node. Note that, also in this case, the embedding $\subskelemb$ of $\subskel$ is fixed, since the order of the black vertices of $\auxb$ induces an order of the virtual edges of $\subskel$. We will again use $\subskelemb$ to either determine whether the instance is negative or to construct the bags of $\mu$.

The case in which $\mu$ is traversable is identical to the R-node case. When $\mu$ is non-traversable, we have $\redge{\mu} \in \subskel$, and thus there exist the two faces $f_{\mu}^l$ and $f_{\mu}^r$ incident to $\redge{\mu}$. However, since $\mu$ has at least two non-traversable children, every two virtual edges of $\subskel$ share at most one face in $\subskelemb$, and thus $\mu$ has no $2$-sided children.

We now consider each traversable child $\nu_i$ of $\mu$. Contrary to the R-node case, the face of $\subskelemb$ in which $\nu_i$ is contained is not necessarily defined in this case by the rigid structure underneath, as the embedding of $\skel{\mu}$ is not unique. Recall that $\nu_i$ corresponds to a white vertex $v_i$ of $\aux$. If $v_i$ has exactly two black neighbors, then they must be connected by an edge in $\auxb$ after the removal of $v_i$. So, they are consecutive in the order of the black vertices that we used to construct $\subskelemb$. Thus, the two virtual edges of $\subskel$ corresponding to them share a face in $\subskelemb$. We say in this case that $e_i$ is \emph{contained in} this particular face. If $v_i$ has exactly one black neighbor in $\aux$, then $e_i$ may be contained in any of the two faces of $\subskelemb$ incident to the virtual edge $e$ corresponding to this black vertex. However, we cannot make a choice at this stage, as this may depend on other pairs whose vertices belong to the subgraph of $G$ represented by $e$ (that is, $\pert{\nu_j}$, if $e = e_j$, for some $1 \leq j \leq h$, and $G \setminus \pert{\mu}$, if $e = \redge{\mu}$). If $e = e_j$, then we add a new bag $B_{\nu_j}$ to the child $\nu_j$ of $\mu$, so that $S_{\nu_j}$ contains all the vertices of the special bag of $\nu_i$, while $T_{\nu_j}$ is empty. The association of $S_{\nu_j}$ with one of the two faces incident to $e_j$, to be performed later, will determine the face in which $e_i$ is contained. In the case in which $e = \redge{\mu}$, virtual edge $e_i$ should be contained either in $f_{\mu}^l$ or in $f_{\mu}^r$, but again we cannot determine which of the two. Furthermore, we cannot even delegate this choice to the association of the pockets, since $\redge{\mu}$ does not correspond to a child of $\mu$. Thus, we do not associate it to any face, but we will use it to create the bags of $\mu$. Finally, when $v_i$ has no black neighbors, its special bag is empty. 

Once all traversable children have been considered, we associate the special bags and the pockets of the non-special bags with the faces of $\subskelemb$, as in the R-node case. Then, we construct the bags of $\mu$. We add the poles of $\mu$ to its special bag, if they belong to $\{x_1,\dots,x_k\}$. As in the R-node case, we add to $\mu$ a bag $B_{\mu}$, whose pockets $S_{\mu}$ and $T_{\mu}$ have all the vertices of the special bags and of the pockets associated with $f_{\mu}^l$ and $f_{\mu}^r$, respectively. Finally, for each traversable child $\nu_i$ of $\mu$ that has not been associated, we add a new bag $B_{\mu}^i$ so that $S_{\mu}^i$ contains all the vertices of the special bag of~$\nu_i$, while $T_{\mu}^i$ is empty.
Finally, we apply operation \operation to all pairs $\langle x,y \rangle \in W$ such that both $x$ and $y$ belong to $\pert{\mu}$. Hence, R.\ref{r:1} and R.\ref{r:2} are satisfied by any embedding $\embpert{\mu}$ of $\pert{\mu}$ that is described by the bags of $\mu$. This concludes the P-node case.

At the end of the traversal, if root $\rho$ has been visited without declaring the instance negative, the fact that $\pert{\rho} = G$ admits a planar embedding satisfying R.\ref{r:1} implies that $\langle G, W \rangle$ is a positive instance. The proof of the following theorem is in the appendix.

\begin{theorem}\label{th:algorithm-biconnected}
Let \ourinstance{} be an instance of \ourproblem such that the graph induced by the edges in $\Ef \cup \Ei$ is biconnected. We can test in $O(|V|^3 \cdot |\En|)$ time whether $G$ has a drawing with no forbidden crossing.
\end{theorem}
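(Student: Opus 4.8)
The plan is to combine the reduction of Theorem~\ref{th:equivalence} with a correctness-and-complexity analysis of the algorithm of Section~\ref{se:detailed}. First I would apply Theorem~\ref{th:equivalence} to transform the given instance \ourinstance{} into an equivalent instance \embinstance{} of \embproblem in linear time, setting $E_1=\Ef$, $E_2=\Ei$, and adding a pair $\langle u,v\rangle$ to $W$ for each \nonimportant edge $(u,v)\in\En$. Since by hypothesis $\Ef\cup\Ei$ induces a biconnected graph and $E_1\cup E_2=\Ef\cup\Ei$, the graph $G$ of the resulting instance of \embproblem is biconnected, placing us exactly in the setting of Section~\ref{sec:algorithm}. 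It therefore suffices to show that the bottom-up SPQR-tree traversal correctly decides \embproblem on biconnected $G$ within the claimed time bound, and to note that $G$ has a drawing with no forbidden crossing if and only if this instance is positive.

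For correctness I would induct over the SPQR-tree $\mathcal{T}$, anchored on the following invariant for every node $\mu$: after $\mu$ is processed, the special bag $\mathcal{B}_\mu$ and the non-special bags $B_\mu^1,\dots,B_\mu^q$ encode \emph{exactly} the embeddings $\embpert{\mu}$ of $\pert{\mu}$ satisfying R.\ref{r:1} and R.\ref{r:2}. Concretely, a placement of the outer-face vertices $x_1,\dots,x_k$ onto the two sides $f_\mu^l,f_\mu^r$ of the outer face of $\restrpert{\mu}$ (or onto the single outer face, when $\mu$ is traversable) is realizable by such an embedding if and only if it assigns the two pockets of each bag to opposite faces, keeps each special-bag vertex incident to both sides, and is otherwise free across distinct bags. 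The base case is the Q-node, whose pertinent graph is a single edge and whose only action is to insert a pole into $\mathcal{B}_\mu$ when it is an outer-face vertex; here the invariant holds trivially.

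The inductive step is handled per node type. For an S-node, whose skeleton is a cycle, the two sides of the series composition coincide with the two sides of each child, so inheriting the children's non-special bags (when non-traversable) and merging all outer-face vertices into $\mathcal{B}_\mu$ (when traversable) preserves the invariant; the calls to \operation then enforce precisely the R.\ref{r:1} constraints between vertices of different children, and the only route to a contradiction (two vertices of one bag forced to opposite pockets) faithfully signals infeasibility. For an R-node I would exploit that $\skel{\mu}$ has a planar embedding that is unique up to reflection, so the embedding $\subskelemb$ of $\subskel$ is fixed; associating each pocket and special bag with a face of $\subskelemb$ through the edges $\e{y}$ of the partner vertices, and keeping the flips of the $2$-sided children free, reconstructs the invariant, while every negative declaration is forced by an unavoidable face-incidence violation. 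The P-node is analogous once the cyclic order of the parallel children is fixed, which is where the auxiliary graph $\aux$ and its reduction $\auxb$ to black vertices enter: I would prove that ``$\auxb$ is a cycle through all its vertices or a disjoint union of paths'' is both necessary and sufficient for a valid order to exist. The main obstacle is exactly this last pair of arguments. Extending the analysis of~\cite{DBLP:journals/jda/AngeliniBFPR12}, one must verify that the iterative association-propagation terminates with all \emph{mandatory} choices fixed and all genuinely free ones deferred (through the bags of $2$-sided children and of degree-$\le 2$ white vertices), and in particular that each edge added between black vertices to account for a degree-$2$ white vertex faithfully captures the requirement that the corresponding traversable child be contained in the shared face of $\restrpert{\mu}$. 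Closing the induction at the root Q-node $\rho$, where $\pert{\rho}=G$, yields a global embedding satisfying R.\ref{r:1}, hence a positive instance precisely when no negative declaration occurs.

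For the running time, traversability of all nodes is precomputed in linear time by the two-pass scheme of Section~\ref{se:detailed}, and the main traversal visits $O(|V|)$ nodes of $\mathcal{T}$. At each node the cost is dominated by processing the $O(|\En|)$ pairs of $W$ and by the bag manipulations: each bag holds $O(|V|)$ vertices, so one \operation merge costs $O(|V|)$, and the association of pockets to faces together with its propagation iterates over the $O(|V|)$ pockets, each round scanning the relevant pairs and performing $O(|V|)$ face-incidence checks. Bounding the per-node work by $O(|V|^2\cdot|\En|)$ and multiplying by the $O(|V|)$ nodes gives $O(|V|^3\cdot|\En|)$; translating back through Theorem~\ref{th:equivalence} adds only a linear term, which establishes the stated bound.
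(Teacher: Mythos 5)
Your proposal is correct and follows essentially the same route as the paper: reduce to \embproblem via Theorem~\ref{th:equivalence} (noting that the hypothesis makes the graph $G=(V,E_1\cup E_2)$ of the resulting instance biconnected), argue that the bottom-up SPQR-tree traversal maintains R.\ref{r:1} and R.\ref{r:2} through the bags --- which the paper defers to the algorithm's description while you phrase it as an explicit inductive invariant --- and bound the running time by the $O(|V|^2\cdot|W|)$ per-node cost of associating bags and pockets with faces, multiplied by the $O(|V|)$ nodes of the tree. The remaining differences are presentational only; for instance, you charge $O(|V|)$ per \operation where the paper achieves $O(1)$ with suitable data structures, which does not affect the final $O(|V|^3\cdot|\En|)$ bound.
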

\begin{proof}
By Theorem~\ref{th:equivalence}, it suffices to prove that the algorithm described in Section~\ref{se:detailed} decides in $O(|V|^3 \cdot |W|)$ whether an instance \embinstance{} of \embproblem in which the graph $H$ induced by the edges of $E_1$ is biconnected is positive.
	
The correctness of the algorithm follows from the fact that, as already discussed during the description of the algorithm, for each node $\mu \in \mathcal{T}$, requirements R.\ref{r:1} and R.\ref{r:2} are satisfied by any embedding $\embpert{\mu}$ of $\pert{\mu}$ that is described by the bags of $\mu$ (if any). In particular, this holds also for the root $\rho$ of $\mathcal{T}$.
	
Regarding the time complexity, we observe that the construction of the SPQR-tree $\mathcal{T}$ and of the auxiliary graphs $\aux$ and $\subskel$ can be done in $O(|V|+|W|)$ time. Operation \operation needs constant time, adopting elementary data structures to maintain the references between vertices and bags or pockets. Thus, the complexity of our algorithm is dominated by the association of the bags and of the pockets to the faces of $\subskelemb$, in the R- and P-node cases. In this phase of the algorithm, every bag of a child of a node $\mu$ could be considered a number of times that is linear in the total number of bags, which is $O(|V|)$. Also, every time one of these bags is considered, we perform $O(|W|)$ checks. Since the number of bags over all the children of $\mu$ is $O(|V|)$, we have a total $O(|V|^2 \cdot |W|)$ processing time for $\mu$, which hence results in a total $O(|V|^3 \cdot |W|)$ time for $G$, and the statement follows.
\end{proof}

%=================================================================
\section{Conclusions}
\label{sec:conclusions}
%=================================================================

In this paper we studied the problem \ourproblem, in which a graph whose edges are of three types (\fundamental, \important, and \nonimportant) is given and the goal is to construct a drawing in which crossings are allowed only if they involve a \nonimportant edge. For this problem, we gave an efficient algorithm when the graph induced by the \fundamental and \important edges is biconnected. 

The main open problem raised by our work is to determine the complexity in the general case, where the biconnectivity restriction is relaxed. It is also of interest to broaden the study towards the case in which there exist more than three levels of importance for the edges. As a first step, one could consider the case in which there are four levels and the first two form a biconnected graph. Finally, the relationship with {\sc Sefe} should be further investigated to understand whether the techniques used in this paper can be applied to solve some of its open cases.

%=================================================================
\bibliographystyle{abbrvurl}
\bibliography{references}
%=================================================================

\end{document}